\author{\authorblockN{Shiyang Leng\authorrefmark{4}, Derrick Wing Kwan Ng\authorrefmark{1}, Nikola Zlatanov\authorrefmark{2}, and Robert Schober\authorrefmark{3}\thanks{ This work was supported in part by the AvH Professorship Program of the Alexander von Humboldt Foundation.}}
The Pennsylvania State University, USA\authorrefmark{4}\\
The University of New South Wales, Australia\authorrefmark{1}\\
Monash University, Australia\authorrefmark{2}\\
Friedrich-Alexander-University Erlangen-N\"urnberg (FAU), Germany\authorrefmark{3}\\}
\title{Multi-Objective Beamforming for Energy-Efficient SWIPT Systems}
\date{\thistime,\,\today}
\newtheorem{Thm}{Theorem}
\newtheorem{Lem}{Lemma}
\newtheorem{Prob}{Problem}
\newtheorem{T-Prob}{Transformed Problem}
\newtheorem{proposition}{Proposition}
\DeclareMathOperator{\Tr}{\mathrm{Tr}}
\DeclareMathOperator{\zero}{\mathbf{0}}
\DeclareMathOperator{\Rank}{\mathrm{Rank}}
\DeclareMathOperator{\vect}{\mathrm{vec}}
\DeclareMathOperator{\maxo}{\mathrm{maximize}}
\DeclareMathOperator{\mino}{\mathrm{minimize}}
\newcommand{\est}[1]{{E}\!\left\{#1\right\}}
\newcommand{\textoverline}[1]{$\overline{\mbox{#1}}$}
\newcommand{\abs}[1]{\lvert#1\rvert}
\newcommand{\norm}[1]{\lVert#1\rVert}
\begin{document}
%\IEEEspecialpapernotice{(Invited Paper)}
\maketitle

\begin{abstract}
In this paper, we study the resource allocation algorithm design for energy-efficient simultaneous wireless information and power transfer (SWIPT) systems. The considered system comprises a transmitter, an information receiver, and multiple energy harvesting receivers equipped with multiple antennas. We propose a multi-objective optimization framework to study the trade-off between the maximization of the energy efficiency of information transmission and the maximization of wireless power transfer efficiency. The proposed problem formulation takes into account the per  antenna circuit power
consumption of the transmitter and the imperfect channel state information of the energy harvesting receivers. The adopted  non-convex multi-objective optimization problem is  transformed into an equivalent rank-constrained semidefinite program (SDP) and  optimally solved by  SDP relaxation.  Numerical results unveil an interesting trade-off between the considered
conflicting system design objectives and reveal the benefits of multiple transmit antennas
for improving system energy efficiency.
\end{abstract}
\renewcommand{\baselinestretch}{1}
%\normalsize
\section{Introduction}
In recent years, the development of wireless communication networks worldwide has triggered an exponential growth in the number of wireless devices and sensors for applications such as e-health and environmental monitoring. The related tremendous increase in the number of  transmitter(s) and receiver(s) has also led to a huge demand for energy and a better energy management.  Hence,    energy efficient system designs, which adopt energy efficiency (bit-per-Joule) as the performance
metric, have been recently proposed \cite{Magazine:green1}-\nocite{JR:energy_efficient_1,JR:energy_efficient_2}\cite{JR:hybrid_BS}. In \cite{JR:energy_efficient_1}, energy-efficient power
allocation schemes were proposed for cognitive radio systems. In \cite{JR:energy_efficient_2}, energy-efficient link adaptation
was investigated for the maximization of
energy efficiency in frequency-selective channels. In \cite{JR:hybrid_BS}, the authors proposed a resource allocation algorithm design for energy-efficient communication in multicarrier communication systems with hybrid
energy harvesting base stations.  Although energy-efficient resource allocation algorithm designs for traditional communication networks  have been studied in the literature, mobile receivers are often powered by batteries with limited
energy storage  which remain the system performance bottlenecks in perpetuating
the lifetime of wireless networks.

Energy harvesting (EH) based communication system design is a viable solution for prolonging the lifetime of energy-limited devices. Conventional natural sources, such as wind, solar, and biomass, have been exploited as energy sources for fixed-location, outdoor transmitters.  However, these  natural energy sources are often location and weather dependent and may not be suitable for mobile receivers. On the other hand, wireless power transfer (WPT) via electromagnetic waves in radio frequency (RF) enables a comparatively controllable energy harvesting for mobile receivers. In fact, recent progress in the development of RF-EH circuitries has made RF-EH practical for  low-power consumption devices \cite{Krikidis2014}\nocite{JR:Rui_magazine}--\cite{JR:Xiaoming_magazine}, e.g. wireless sensors. Besides, WPT enables the dual use of the information carrier for simultaneous wireless information and power transfer (SWIPT) \cite{CN:WIPT_fundamental}\nocite{JR:WIP_receiver}--\cite{JR:non_linear_SWIPT}. Different from traditional wireless communication systems, where data rate and energy efficiency are the most fundamental system performance metrics, in SWIPT systems, the wireless energy transfer efficiency is an equally important QoS metric. Thus, the design of resource allocation algorithms should take into account the emerging need for energy transfer efficiency. In \cite{JR:MIMO_WIPT}, the authors studied the fundamental rate-energy trade-off region for optimal information beamforming. In \cite{JR:WIPT_fullpaper_OFDMA}, power allocation, user scheduling, and subcarrier allocation were jointly designed to enable an energy-efficient multicarrier SWIPT system.  In \cite{JR:EE_SWIPT_Massive_MIMO}, the authors proposed the use of large scale multiple-antenna systems for improving energy efficiency of  SWIPT.  Although energy-efficient data communication design and energy-efficient WPT have already been studied individually, the trade-off between these two system design paradigms is still unclear for SWIPT systems. In particular, these two design goals may conflict with each other but both are desirable to system designer. However,  the single-objective resource allocation algorithms proposed in \cite{JR:energy_efficient_1}--\cite{JR:hybrid_BS}, \cite{CN:WIPT_fundamental}--\cite{JR:MIMO_WIPT,JR:EE_SWIPT_Massive_MIMO}  may no longer be applicable in energy-efficient SWIPT networks.

%%===============================================================================================

In this paper, we address the above issues. To this end,
we formulate the resource allocation algorithm design as a multi-objective optimization problem which strikes a balance between  the maximization of energy efficiency of information transmission  and the maximization of WPT efficiency.
The resulting non-convex optimization problem is solved
optimally by semidefinite programming (SDP) relaxation. Simulation results illustrate the trade-off between the conflicting
system design objectives.
%%===============================================================================================
%==============================================================================================================================================
\section{System Model}
In this section, we first define the adopted notations and then present the  channel model for energy-efficient SWIPT networks.

\subsection{Notation}

$\mathbf{A}^H$, $\Tr(\mathbf{A})$, and $\Rank(\mathbf{A})$ represent the  Hermitian transpose, trace, and rank of  matrix $\mathbf{A}$; $\mathbf{A}\succeq \mathbf{0}$ indicates that $\mathbf{A}$ is a  positive semidefinite matrix; matrix $\mathbf{I}_{N}$
denotes an $N\times N$ identity matrix. $\vect(\mathbf{A})$ denotes the vectorization of matrix $\mathbf{A}$ by stacking its columns from left to right to form a column vector.
$\mathbf{A}\otimes \mathbf{B}$ denotes the Kronecker product of matrices $\mathbf{A}$ and $ \mathbf{B}$. $[\mathbf{B}]_{a:b,c:d}$ returns the $a$-th to the $b$-th rows and the $c$-th to the $d$-th columns block submatrix of $\mathbf{B}$.  $\mathbb{C}^{N\times M}$ denotes the space of $N\times M$ matrices with complex entries.
$\mathbb{H}^N$ represents the set of all $N$-by-$N$ complex Hermitian matrices.
The distribution of a circularly symmetric complex Gaussian (CSCG)
vector with mean vector $\mathbf{x}$ and covariance matrix
$\mathbf{\Sigma}$  is denoted by ${\cal
CN}(\mathbf{x},\mathbf{\Sigma})$, and $\sim$ means ``distributed
as".  $\cal E\{\cdot\}$ denotes statistical expectation.  $\norm{\cdot}$ and $\norm{\cdot}_F$ denote the Euclidean norm and the Frobenius norm of a vector/matrix, respectively.   $\mathrm{Re}(\cdot)$ extracts the real part of a complex-valued input.

\begin{figure}
\centering\vspace*{-5mm}
\includegraphics[width=3.5 in]{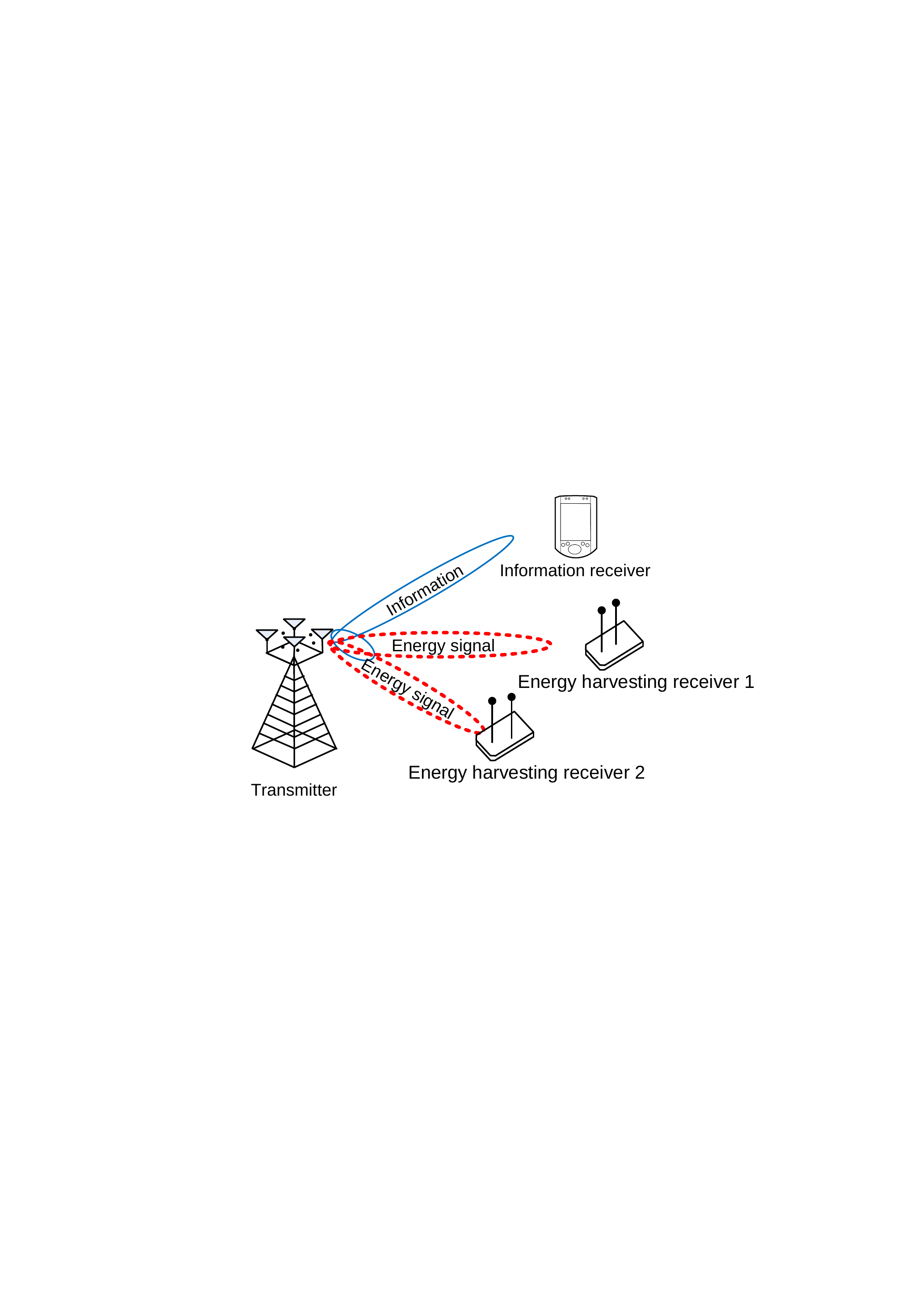}
\caption{A downlink communication system with an information receiver (IR) and $J=2$ energy harvesting receivers (ERs).}
\label{fig:system_model2}\vspace*{-5mm}
\end{figure}
\subsection{Channel Model}
We focus on a downlink  SWIPT system. The system consists of a transmitter, a single-antenna information receiver (IR), and multiple energy harvesting receivers (ERs). The transmitter is equipped with $N_\mathrm{T}$ antennas and each ER is equipped with $N_{\mathrm{R}}$ receiving antennas. We assume that the ERs are roaming  wireless terminals from other communication systems searching for additional power supply in the RF. The transmission is divided into time slots. In each time slot, the transmitter sends a precoded information signal and an energy signal simultaneously to facilitate information transmission to the IR and power transfer to the ERs, cf. Figure \ref{fig:system_model2}. The transmit signal is given by
\begin{eqnarray}
\mathbf{x}=\mathbf{w}_\mathrm{I}s+\mathbf{w}_\mathrm{E},
\end{eqnarray}
where $s\in \mathbb{C}$ is the information-bearing symbol with $\est{{\abs{s}}^2}=1$ and $\mathbf{w}_\mathrm{I}\in \mathbb{C}^{{N_\mathrm{T}}\times 1}$ is the corresponding information beamforming vector. $\mathbf{w}_\mathrm{E}\in \mathbb{C}^{{N_\mathrm{T}}\times 1}$ is the energy signal facilitating energy transfer to the ERs. The energy signal $\mathbf{w}_\mathrm{E}$ is  a deterministic pseudo-random sequence with zero mean and covariance matrix
$\mathbf{W}_\mathrm{E}$. Since $\mathbf{w}_\mathrm{E}$ is generated at the transmitter by a pseudo-random sequence generator with a predefined seed, the energy signal is known to the IR. Thus, the interference caused by the energy signal can be completely cancelled at the IR.

We assume a narrow-band slow fading channel between the transmitter and receivers. Then, the received signals at the IR and ER $j$ are expressed as
\begin{eqnarray}
y^\mathrm{IR}=\mathbf{h}^H(\mathbf{w}_\mathrm{I}s+\mathbf{w}_\mathrm{E})+n_\mathrm{I},\\
\mathbf{y}^\mathrm{ER}_j=\mathbf{G}^H_j(\mathbf{w}_\mathrm{I}s+\mathbf{w}_\mathrm{E})+\mathbf{n}^\mathrm{E}_j,
\end{eqnarray}
where $\mathbf{h}\in \mathbb{C}^{N_{\mathrm{T}}\times 1}$ is the channel vector between the transmitter and the IR, and $\mathbf{G}_j\in \mathbb{C}^{N_{\mathrm{T}}\times N_{\mathrm{R}}}$ is the channel matrix between the transmitter and ER $j$. Variables $\mathbf{h}$ and $\mathbf{G}_j$ capture the joint effect of multipath fading and path loss. $n_\mathrm{I}\in \mathbb{C}$ and $\mathbf{n}^\mathrm{E}_j\in \mathbb{C}^{ N_{\mathrm{R}}\times 1}$ are additive white Gaussian noises (AWGNs) at the IR and ER $j$, respectively, and are distributed as ${\cal CN}(0,\sigma_\mathrm{I}^2)$ and ${\cal CN}(\zero,\sigma_\mathrm{E}^2\mathbf{I}_{N_\mathrm{R}})$.

\section{Resource Allocation Algorithm Design}
In this section, we present the adopted performance metrics and the problem formulation.

\subsection{Achievable Rate, Harvested Energy, and Energy Efficiency}
 The achievable rate (bit-per-second) at the IR  is given by
\begin{eqnarray}\label{eqn:C}
R=B\log_2\Big(1+\frac{\mathbf{w}_{\mathrm{I}}^H\mathbf{H}\mathbf{w}_\mathrm{I}}{\sigma_\mathrm{I}^2}\Big),
\end{eqnarray}
where $B$ is the system bandwidth and $\mathbf{H}=\mathbf{h}\mathbf{h}^H$. We note that the interference caused by the energy signal, i.e., $\Tr(\mathbf{w}_{\mathrm{E}}^H\mathbf{H}\mathbf{w}_\mathrm{E})$, is removed from the IR via successive interference cancellation before the IR decodes the desired information  since the energy signal $\mathbf{w}_\mathrm{E}$ is known to the
receiver.   On the other hand, both the information signal and the energy signal can act as RF energy sources for the ER due to the broadcast nature of wireless channels. As a result, the total harvested energy\footnote{We note that a normalized energy unit, i.e., Joule-per-second, is adopted.  Therefore,
the terms ``power" and ``energy" are used interchangeably in this paper.} at ER $j$ is given by
\begin{eqnarray}\label{eqn:per_ER}
P^\mathrm{harv}_j&=&\eta_j\Tr\Big(\mathbf{G}_j^H\big(\mathbf{w}_\mathrm{I}\mathbf{w}^H_\mathrm{I}
+\mathbf{W}_{\mathrm{E}}\big)\mathbf{G}_j\Big),
\end{eqnarray}
and $\eta_j$ is the energy conversion efficiency of ER $j$ which is a constant with $0\leq\eta_j\leq1$ and models the energy loss of the process of converting the received RF energy to electrical energy for storage. We ignore the thermal noise at the receiving antenna for energy harvesting as it is relatively small compared to the received signal power.

 Energy efficiency is a fundamental system performance metric in modern communication networks. To design a resource allocation algorithm for
energy-efficient communication, the total power consumption
has to be included in the optimization objective function.
Thus, we model the power dissipation (Joule-per-second) in the system as
\begin{eqnarray}\label{eqn:Ptotal}
P_\mathrm{tot}&=&\frac{\norm{\mathbf{w}_\mathrm{I}}^2+\Tr(\mathbf{W}_\mathrm{E})}{{\xi}}+P_{\mathrm{B}},\\
\mbox{where}\,\, P_{\mathrm{B}}&=&N_\mathrm{T}P_\mathrm{ant}+P_\mathrm{c}.\label{eqn:PB}
\end{eqnarray}
\noindent $0< \xi\leq1$ is the constant power amplifier efficiency. The first term in (\ref{eqn:Ptotal}) is the total power consumption in the power amplifier. ${N_{\mathrm{T}}P_{\mathrm{ant}}}$ in \eqref{eqn:PB} accounts for the dynamic circuit power consumption which is proportional to the number of transmit antennas. $P_\mathrm{ant}$ denotes the power dissipation at each transmit antenna, including the dissipation in the transmit filter, mixer, frequency synthesizer, digital-to-analog converter (DAC), etc. $P_{\mathrm{c}}$ denotes the fixed circuit power consumption for baseband signal processing.

Therefore, the achievable rate energy efficiency (AR-EE) and the energy transfer energy efficiency (ET-EE) of the considered system are defined  as
\begin{eqnarray}
\Phi_\mathrm{IR}
&=&\frac{B\log_2(1+\frac{\mathbf{w}^H_\mathrm{I}\mathbf{H}\mathbf{w}_\mathrm{I}}{\sigma_\mathrm{I}^2})}{(\norm{\mathbf{w}_\mathrm{I}}^2
+\Tr(\mathbf{W}_\mathrm{E}))/{\xi}+N_\mathrm{T}P_\mathrm{ant}+P_\mathrm{c}}\mbox{ and}\\
\Phi_\mathrm{EH}
&=&\frac{\sum_{j}P^\mathrm{harv}_j}{(\norm{\mathbf{w}_\mathrm{I}}^2+\Tr(\mathbf{W}_\mathrm{E}))/
{\xi}+N_\mathrm{T}P_\mathrm{ant}+P_\mathrm{c}},\label{eq:ET-EE}
\end{eqnarray}
respectively, where $P^\mathrm{harv}_j$ is given in \eqref{eqn:per_ER}.

\subsection{Channel State Information (CSI)}
In this paper, we focus on a Time Division
Duplex (TDD)  communication system with slowly time-varying channels.  At the
beginning of each time slot,  handshaking is performed between the transmitter and the IR. As a result, the downlink CSI of the IR can be obtained by measuring the uplink training sequences embedded in the handshaking signals. Thus, we assume that the transmitter-to-IR fading gain, $\mathbf{h}$, can be estimated perfectly at the transmitter. On the other hand,   the ERs may not directly interact with the transmitter. Besides, the ERs may be silent for long periods of time. As a result,  the CSI of the ERs  can  be  obtained only occasionally at the transmitter when the ERs communicate with the transmitter. Hence, the CSI for the ERs may be outdated when the transmitter performs resource allocation.  We adopt a deterministic model \cite{JR:CSI-determinisitic-model,JR:Kwan_secure_imperfect} to characterize the impact of the CSI imperfection for resource allocation design. The  CSI of the link between the transmitter
and ER $j$ is modeled as
\begin{eqnarray}\label{eqn:outdated_CSI}
\mathbf{G}_j&=&\mathbf{\widehat G}_j + \Delta\mathbf{G}_j,\,   \forall j\in\{1,\ldots,J\}, \mbox{   and}\\
{\bm\Psi }_j&\triangleq& \Big\{\Delta\mathbf{G}_j\in \mathbb{C}^{N_{\mathrm{T}}\times N_{\mathrm{R}}}  :\norm{\Delta\mathbf{G}_j}_F^2 \le \varepsilon_j^2\Big\},\forall j,\label{eqn:outdated_CSI-set}
\end{eqnarray}
where $\mathbf{\widehat G}_j\in\mathbb{C}^{N_{\mathrm{T}}\times N_{\mathrm{R}}}$ is the  matrix CSI estimate of the channel of ER $j$ that is available at the  transmitter. $\Delta\mathbf{G}_j$ represents the unknown channel uncertainty and the continuous set ${\bm\Psi }_j$ in (\ref{eqn:outdated_CSI-set})  defines a continuous space spanned by all possible channel uncertainties. Constant $\varepsilon_j$ represents the maximum value of the norm of the CSI estimation error matrix  $ \Delta\mathbf{G}_j$ for
ER $j$.
%==============================================================================================================================================
\section{Problem Formulation and Solution}

\subsection{Problem Formulation}
In SWIPT systems, AR-EE maximization and ET-EE maximization are both desirable system design objectives. In this section, we first propose two problem formulations for single-objective system design for SWIPT. Each single-objective problem describes one important aspect of the system design. Then, we consider both system design objectives jointly via the multi-objective problem formulation.

The first system design objective is the maximization of AR-EE  without the consideration of energy harvesting. The corresponding optimization problem is formulated\footnote{We note that the considered problem formulation can be easily extended to the case with a minimum data rate requirement. Yet, a stringent data rate requirement does not facilitate the study the trade-off between different system objectives due to the resulting smaller feasible solution set.} as
\begin{framed}
\begin{Prob}AR-EE Maximization:\label{prob:WIPT_AR-EE}
\begin{eqnarray}
\underset{\mathbf{W}_{\mathrm{E}}\in\mathbb{H}^{N_\mathrm{T}},\mathbf{w}_{\mathrm{I}}}{\maxo}\,\, \hspace*{-2mm}&&\Phi_{\mathrm{IR}}\\
\hspace*{-2mm}\mathrm{subject\,\,to}\,\, &&\mathrm{C1}:\,\,\norm{\mathbf{w}_\mathrm{I}}^2+\Tr(\mathbf{W}_{\mathrm{E}})\leq P_{\mathrm{max}},\notag\\
\hspace*{-2mm}&&\mathrm{C2}:\,\,\mathbf{W}_{\mathrm{E}}\succeq \mathbf{0}\notag
%\hspace*{-2mm}&&\mathrm{C3}:\,\,\underset{\Delta\mathbf{G}_j{\in\bm\Xi}}{\min}\,\, P^\mathrm{harv}_j\geq P^{\min}_j.\notag
\end{eqnarray}
\end{Prob}
\end{framed}\noindent
$P_{\mathrm{max}}$ in constraint C1 denotes the maximum transmit power budget. In addition, covariance matrix $\mathbf{W}_{\mathrm{E}}$ is a positive semidefinite Hermitian matrix as indicated by constraint C2.

The second system design objective is the maximization of the ET-EE. The corresponding problem formulation is given as
\begin{framed}
\begin{Prob}ET-EE Maximization:\label{prob:WIPT_ET-EE}
\begin{eqnarray}
\underset{\mathbf{W}_{\mathrm{E}}\in\mathbb{H}^{N_\mathrm{T}},\mathbf{w}_{\mathrm{I}}}{\maxo}\,\, && \underset{\Delta\mathbf{G}_j{\in\bm\Psi_j}}{\min}\,\,\Phi_{\mathrm{EH}}\\
\mathrm{subject\,\,to}\,\, &&\mathrm{C1,\,C2}.\notag
\end{eqnarray}
\end{Prob}
\end{framed}
%The second system design objective is the minimization of the total transmit power at the transmitter. The problem formulation is proposed as
%\begin{framed}
%\begin{Prob}Total Transmit Power Minimization:\label{prob:WIPT_Ptrans}
%\begin{eqnarray}
%\underset{\mathbf{W}_{\mathrm{E}}\in\mathbb{H}^{N_\mathrm{T}},\mathbf{w}_{\mathrm{I}}}{\mino}\,\, &&\norm{\mathbf{w}_\mathrm{I}}^2+\Tr(\mathbf{W}_{\mathrm{E}})\\
%\mathrm{subject\,\,to}\,\, &&\mathrm{C1,\,C2,\,C3}.\notag
%\end{eqnarray}
%\end{Prob}
%\end{framed}

For the sake of notational simplicity, we denote the objective functions in the above problems as $F_n$, $n=1,2$. In practice, these two system design objectives are both desirable from the system operator perspective. However, it is expected that there is a non-trivial trade-off between these objectives. In order to meet these conflicting system design objectives systematically and simultaneously, we adopt the
weighted Tchebycheff method for the multi-objective optimization \cite{JR:MOOP} which can provide the complete Pareto optimal set by varying predefined preference parameters. To this end, we incorporate the two individual system design objectives into a multi-objective optimization problem (MOOP), which is formulated as
\begin{framed}
\begin{Prob}Multi-Objective Optimization Problem:\label{prob:multiobj_WIPTsepuser}
\begin{eqnarray}
\underset{\mathbf{W}_{\mathrm{E}}\in\mathbb{H}^{N_\mathrm{T}},\mathbf{w}_{\mathrm{I}}}{\mino}\,\,&&\max_{i=1,2}\,\, \Big\{\omega_i(F_i^*-F_i)\Big\}\notag\\
\mathrm{subject\,\,to} &&\mathrm{C1,\,C2},
\end{eqnarray}
\end{Prob}\end{framed}
\noindent where $F_i^*$ is the optimal objective value with respect to Problem $i$. $\omega_i$ is a weight imposed on objective function $i$ subject to $0\leq\omega_i\leq1$ and $\sum_i\omega_i=1$, which indicates the  preference of the system designer for the $i$-th objective function over the others. In the extreme case, when $\omega_i=1$ and $\omega_n=0, \forall n\neq i $, Problem \ref{prob:multiobj_WIPTsepuser} is equivalent to single-objective optimization problem $i$.

%========================================================================================================================================
\section{Optimization Solutions}
It can be observed that the objective functions of Problems \ref{prob:WIPT_AR-EE}--\ref{prob:multiobj_WIPTsepuser} are non-convex functions. In general, there is no well-known systematical approach for solving non-convex optimization problems. In order to obtain a tractable solution, we first transform the non-convex objective functions using the Charnes-Cooper transformation. Then, we use semidefinite programming relaxation (SDR) to obtain the resource allocation solution for the reformulated problem.

We first reformulate the aforementioned three optimization problems by defining a set of new optimization variables:
\begin{eqnarray}\label{eqn:newvariabledefine}
\mathbf{W}_\mathrm{I}&=&\mathbf{w}_\mathrm{I}\mathbf{w}_\mathrm{I}^H,\,\,
\theta=\frac{1}{P_\mathrm{tot}},\notag\\
\,\,\overline{\mathbf{W}}_\mathrm{I}&=&\theta\mathbf{W}_\mathrm{I},\,\,\mathrm{and}\,\,\overline{\mathbf{W}}_\mathrm{E}=\theta\mathbf{W}_\mathrm{E}.
\end{eqnarray}
Then, the original problems can be rewritten with respect to the new optimization variables $\{\overline{\mathbf{W}}_\mathrm{I},\overline{\mathbf{W}}_\mathrm{E}, \theta\}$. Problem 1 becomes
\begin{framed}
\begin{Prob}Transformed AR-EE Maximization Problem:\label{prob:WIPT_AR-EE_reform}
\begin{eqnarray}
\underset{\overline{\mathbf{W}}_\mathrm{I},\overline{\mathbf{W}}_\mathrm{E}\in\mathbb{H}^{N_\mathrm{T}},\theta}{\maxo}\,\, &&\hspace*{-5mm}\theta\log_2\Big(1+\frac{\Tr(\mathbf{H}\overline{\mathbf{W}}_\mathrm{I})}{\theta\sigma_\mathrm{I}^2}\Big)\\
\mathrm{subject\,\,to}\,\, &&\hspace*{-5mm}\overline{\mathrm{C1}}:\,\Tr(\overline{\mathbf{W}}_\mathrm{I}+\overline{\mathbf{W}}_\mathrm{E})\leq\theta P_{\mathrm{max}},\notag\\
&&\hspace*{-5mm}\overline{\mathrm{C2}}:\,\overline{\mathbf{W}}_\mathrm{I}\succeq \mathbf{0},\,\,\overline{\mathbf{W}}_\mathrm{E}\succeq \mathbf{0},\notag\\
&&\hspace*{-5mm}\overline{\mathrm{C3}}:\,\Rank(\overline{\mathbf{W}}_\mathrm{I})\leq1,\notag\\
&&\hspace*{-5mm}\overline{\mathrm{C4}}:\,\frac{\Tr(\overline{\mathbf{W}}_\mathrm{I}+
\overline{\mathbf{W}}_\mathrm{E})}{\xi}+\theta P_{\mathrm{B}}\leq1,\notag\\
&&\hspace*{-5mm}\overline{\mathrm{C5}}:\,\theta\ge0, \notag
\end{eqnarray}
\end{Prob}
\end{framed}\noindent
where $\overline{\mathbf{W}}_\mathrm{I}\succeq\zero, \overline{\mathbf{W}}_\mathrm{I}\in\mathbb{H}^{N_\mathrm{T}}$, and $\overline{\mathrm{C3}}$  are imposed to guarantee that $\overline{\mathbf{W}}_\mathrm{I}=\theta\mathbf{w}_\mathrm{I}\mathbf{w}_\mathrm{I}^H$. Constant $B$ is dropped from the objective function in Problem \ref{prob:WIPT_AR-EE_reform} since it is independent of the optimization variables. Similarly, Problem 2 becomes
\begin{framed}
\begin{Prob}Transformed ET-EE Maximization Problem:\label{prob:WIPT_ET-EE_reform}
\begin{eqnarray}
\underset{\overline{\mathbf{W}}_\mathrm{I},\overline{\mathbf{W}}_\mathrm{E}\in\mathbb{H}^{N_\mathrm{T}},\theta,\gamma_j}{\maxo}\,\, && \sum_{j=1}^J \gamma_j\\
\mathrm{subject\,\,to}\,\, &&\overline{\mathrm{C1}} - \overline{\mathrm{C5}}.\notag\\
&&\overline{\mathrm{C6}}:  \gamma_j\leq \underset{\Delta\mathbf{G}_j{\in\bm\Psi_j}}{\min}\,\, P^\mathrm{harv}_j,\forall j, \notag
\end{eqnarray}
\end{Prob}\end{framed}\noindent where $\gamma_j$ are auxiliary optimization variables.

Finally, Problem 3 can be written as
 \begin{framed}
\begin{Prob}Transformed MOOP:\label{prob:multiobj_WIPTsepuser2}
\begin{eqnarray}
\underset{\overline{\mathbf{W}}_\mathrm{I},\overline{\mathbf{W}}_\mathrm{E}\in\mathbb{H}^{N_\mathrm{T}},\theta,\tau}{\mino}&&\hspace*{-5mm}\tau \\
\mathrm{subject\,\,to}\,\,\,&&\hspace*{-5mm}\overline{\mathrm{C1}} - \overline{\mathrm{C6}},\notag\\
&&\hspace*{-5mm}\overline{\mathrm{C7}}:\,\omega_i(F_i^*-\hspace*{-0.5mm}F_i)\leq \tau,i\in\{4,5\},\notag
\end{eqnarray}
\end{Prob}\end{framed}
\noindent where $\tau$ is an auxiliary optimization variable.

\begin{proposition}\label{prop:equivalency}
The Problems \ref{prob:WIPT_AR-EE_reform}-\ref{prob:multiobj_WIPTsepuser2} are equivalent transformations of the original Problems \ref{prob:WIPT_AR-EE}-\ref{prob:multiobj_WIPTsepuser}, respectively.
\end{proposition}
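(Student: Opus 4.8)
The plan is to prove the three claimed equivalences by exhibiting, in each case, an objective-value-preserving correspondence between the feasible set of the original problem and the subset of the transformed feasible set on which $\overline{\mathrm{C4}}$ is active, and then showing that restricting attention to this subset costs nothing in optimality. I would treat Problems~\ref{prob:WIPT_AR-EE} and~\ref{prob:WIPT_AR-EE_reform} first as a template. In the forward direction, given any $(\mathbf{w}_\mathrm{I},\mathbf{W}_\mathrm{E})$ feasible for Problem~\ref{prob:WIPT_AR-EE}, note that $P_\mathrm{tot}>0$ by \eqref{eqn:Ptotal}--\eqref{eqn:PB}, set $\theta=1/P_\mathrm{tot}>0$, $\overline{\mathbf{W}}_\mathrm{I}=\theta\mathbf{w}_\mathrm{I}\mathbf{w}_\mathrm{I}^H$, $\overline{\mathbf{W}}_\mathrm{E}=\theta\mathbf{W}_\mathrm{E}$, and check directly that $\overline{\mathbf{W}}_\mathrm{I}\succeq\zero$ of rank at most one and $\overline{\mathbf{W}}_\mathrm{E}\succeq\zero$ give $\overline{\mathrm{C2}},\overline{\mathrm{C3}},\overline{\mathrm{C5}}$; multiplying C1 by $\theta$ gives $\overline{\mathrm{C1}}$; multiplying the identity \eqref{eqn:Ptotal} by $\theta$ gives $\overline{\mathrm{C4}}$ \emph{with equality}; and, using $\Tr(\mathbf{H}\overline{\mathbf{W}}_\mathrm{I})=\theta\,\mathbf{w}_\mathrm{I}^H\mathbf{H}\mathbf{w}_\mathrm{I}$, the transformed objective equals $\Phi_\mathrm{IR}$ up to the positive constant $B$.

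For the converse I would lean on homogeneity: given any feasible $(\overline{\mathbf{W}}_\mathrm{I},\overline{\mathbf{W}}_\mathrm{E},\theta)$, replacing it by $(\alpha\overline{\mathbf{W}}_\mathrm{I},\alpha\overline{\mathbf{W}}_\mathrm{E},\alpha\theta)$ with $\alpha>0$ preserves $\overline{\mathrm{C1}},\overline{\mathrm{C2}},\overline{\mathrm{C3}},\overline{\mathrm{C5}}$ and multiplies the objective by $\alpha$, whereas in $\overline{\mathrm{C4}}$ only the left-hand side scales while the right-hand side stays equal to $1$. Hence if $\overline{\mathrm{C4}}$ were slack at a candidate optimum we could take $\alpha>1$ and strictly increase the objective; this forces $\overline{\mathrm{C4}}$ to be active and $\theta>0$ at optimality, provided the optimal value is positive --- which it is for any $P_{\mathrm{max}}>0$, since a nonzero $\mathbf{w}_\mathrm{I}$ gives a feasible point with positive objective while the objective extends continuously to $0$ at $\theta=0$. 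With $\overline{\mathrm{C4}}$ active and $\theta>0$, set $\mathbf{W}_\mathrm{E}=\overline{\mathbf{W}}_\mathrm{E}/\theta$ and recover $\mathbf{w}_\mathrm{I}$ from the spectral decomposition of the positive semidefinite matrix $\overline{\mathbf{W}}_\mathrm{I}/\theta=\mathbf{w}_\mathrm{I}\mathbf{w}_\mathrm{I}^H$ of rank at most one; the now-tight $\overline{\mathrm{C4}}$ reads $\theta P_\mathrm{tot}=1$, C1 and C2 follow from $\overline{\mathrm{C1}},\overline{\mathrm{C2}}$, and the objective values match, completing the first equivalence.

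The other two pairs reuse this correspondence. For Problems~\ref{prob:WIPT_ET-EE} and~\ref{prob:WIPT_ET-EE_reform} I would additionally note that $\Phi_\mathrm{EH}=\theta\sum_j P^\mathrm{harv}_j$ and that the inner minimization separates across $j$, since each summand depends only on its own, independently constrained block $\Delta\mathbf{G}_j\in\bm\Psi_j$, so $\min_{\{\Delta\mathbf{G}_j\}}\sum_j(\cdot)=\sum_j\min_{\Delta\mathbf{G}_j}(\cdot)$; introducing epigraph variables $\gamma_j$ with $\overline{\mathrm{C6}}$ (read in the scaled variables) and maximizing $\sum_j\gamma_j$ then reproduces $\max\min\Phi_\mathrm{EH}$, with each $\gamma_j$ tight at optimality. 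For Problems~\ref{prob:multiobj_WIPTsepuser} and~\ref{prob:multiobj_WIPTsepuser2}, the first two equivalences identify $F_4,F_5$ with $F_1,F_2$ (up to the same constant) and hence $F_4^*=F_1^*$, $F_5^*=F_2^*$, so the $\tau$-epigraph form of the weighted Tchebycheff objective via $\overline{\mathrm{C7}}$ over the region $\overline{\mathrm{C1}}$--$\overline{\mathrm{C6}}$ is exactly Problem~\ref{prob:multiobj_WIPTsepuser} transported through the bijection.

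I expect the main obstacle to be the backward direction of the Charnes--Cooper step: proving that relaxing the power identity $\theta P_\mathrm{tot}=1$ to the inequality $\overline{\mathrm{C4}}$, and $\theta>0$ to $\overline{\mathrm{C5}}$, loses nothing. The homogeneity argument above settles it, but it depends on two facts worth stating explicitly --- that the transformed problem attains a strictly positive optimum (so $\theta=0$, where the objective is $0$, is not optimal), which holds whenever $P_{\mathrm{max}}>0$, and that the lifting is closed, i.e.\ $\overline{\mathrm{C2}}$ and $\overline{\mathrm{C3}}$ together guarantee $\overline{\mathbf{W}}_\mathrm{I}=\theta\mathbf{w}_\mathrm{I}\mathbf{w}_\mathrm{I}^H$ for some $\mathbf{w}_\mathrm{I}$. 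Everything else --- unrolling the definitions, the separability of the worst case in the ET-EE problem, and the epigraph reformulations with $\gamma_j$ and $\tau$ --- is routine.
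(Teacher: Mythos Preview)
Your proposal is correct and follows exactly the Charnes--Cooper route the paper invokes; the paper itself does not supply a proof but merely names the transformation and cites \cite{JR:MOOP} for details, so your argument is strictly more complete than what appears in the text. In particular, your homogeneity argument showing that $\overline{\mathrm{C4}}$ must be tight at optimality (and hence that the relaxation from the equality $\theta P_\mathrm{tot}=1$ to the inequality is lossless) is the standard way to justify this step and is precisely what the cited reference would supply.
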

\begin{proof}
The transformation is based on  Charnes-Cooper transformation. Due to the space limitation, we refer to \cite{JR:MOOP} for  proof for a similar problem.
\end{proof}

We note that  Problem \ref{prob:multiobj_WIPTsepuser2} is a generalization of Problems 4 and 5. If Problem \ref{prob:multiobj_WIPTsepuser2} can be solved optimally by an algorithm, then the algorithm can also be used to solve Problems 4 and 5. Thus, we focus on the method for solving\footnote{In studying the solution structure of Problem 6, we assume that the optimal objective values of Problems 4, 5 are given constants, i.e.,  $F^*_p,\forall p\in\{4,5\}$, are known. Once the structure of the  optimal resource allocation scheme  of Problem 6 is obtained, it can be exploited to obtain the optimal solution of  Problems 4, 5. } Problem \ref{prob:multiobj_WIPTsepuser2}. It is evident that Problem \ref{prob:multiobj_WIPTsepuser2} is non-convex due to the rank-one beamforming matrix constraint $\overline{\mathrm{C3}}:\,\,\Rank(\overline{\mathbf{W}}_\mathrm{I})\leq1$. Besides, constraint $\overline{\mathrm{C6}}$ involves infinitely many constraints due to the continuous uncertainty set $\bm{\Psi}_j$. Next, we introduce a Lemma which  allows us to transform constraint $\overline{\mathrm{C6}}$ into a finite number of linear matrix inequalities
(LMIs) constraints.
\begin{Lem}[S-Procedure \cite{book:convex}] Let a function $f_m(\mathbf{x}),m\in\{1,2\},\mathbf{x}\in \mathbb{C}^{N\times 1},$ be defined as
\begin{eqnarray}
f_m(\mathbf{x})=\mathbf{x}^H\mathbf{A}_m\mathbf{x}+2 \mathrm{Re} \{\mathbf{b}_m^H\mathbf{x}\}+c_m,
\end{eqnarray}
where $\mathbf{A}_m\in\mathbb{H}^N$, $\mathbf{b}_m\in\mathbb{C}^{N\times 1}$, and $c_m\in\mathbb{R}$. Then, the implication $f_1(\mathbf{x})\le 0\Rightarrow f_2(\mathbf{x})\le 0$  holds if and only if there exists an $\omega\ge 0$ such that
\begin{eqnarray}\omega
\begin{bmatrix}
       \mathbf{A}_1 & \mathbf{b}_1          \\
       \mathbf{b}_1^H & c_1           \\
           \end{bmatrix} -\begin{bmatrix}
       \mathbf{A}_2 & \mathbf{b}_2          \\
       \mathbf{b}_2^H & c_2           \\
           \end{bmatrix}          \succeq \mathbf{0},
\end{eqnarray}
provided that there exists a point $\mathbf{\hat{x}}$ such that $f_k(\mathbf{\hat{x}})<0$.
\end{Lem}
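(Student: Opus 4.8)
The statement quoted as Lemma~1 is the classical S\nobreakdash-procedure (the S\nobreakdash-lemma), so strictly speaking it is a cited result and in a conference paper one would simply invoke \cite{book:convex}. Should a proof be wanted, the natural route is to prove the two directions of the equivalence separately after homogenizing the two quadratics. To each $f_m$ one associates the Hermitian matrix
\[
\mathbf{M}_m=\begin{bmatrix}\mathbf{A}_m & \mathbf{b}_m\\ \mathbf{b}_m^H & c_m\end{bmatrix}\in\mathbb{H}^{N+1},
\]
so that $f_m(\mathbf{x})=\begin{bmatrix}\mathbf{x}^H & 1\end{bmatrix}\mathbf{M}_m\begin{bmatrix}\mathbf{x}\\ 1\end{bmatrix}$, and the asserted LMI reads $\omega\mathbf{M}_1-\mathbf{M}_2\succeq\mathbf{0}$.

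The ``if'' direction is direct and does not use the constraint qualification: if $\omega\ge0$ satisfies $\omega\mathbf{M}_1-\mathbf{M}_2\succeq\mathbf{0}$, then evaluating the quadratic form $\mathbf{z}^H(\omega\mathbf{M}_1-\mathbf{M}_2)\mathbf{z}\ge 0$ at $\mathbf{z}=\begin{bmatrix}\mathbf{x}\\ 1\end{bmatrix}$ gives $\omega f_1(\mathbf{x})-f_2(\mathbf{x})\ge 0$ for every $\mathbf{x}$, so $f_1(\mathbf{x})\le 0$ together with $\omega\ge0$ forces $f_2(\mathbf{x})\le\omega f_1(\mathbf{x})\le 0$.

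For the ``only if'' direction, assume the implication holds and fix a point $\widehat{\mathbf{x}}$ with $f_1(\widehat{\mathbf{x}})<0$. The geometric core is that the joint range
\[
\mathcal{R}=\{(\mathbf{z}^H\mathbf{M}_1\mathbf{z},\,\mathbf{z}^H\mathbf{M}_2\mathbf{z}):\mathbf{z}\in\mathbb{C}^{N+1}\}\subseteq\mathbb{R}^2
\]
is a convex cone: writing $\mathbf{z}^H\mathbf{M}_1\mathbf{z}+i\,\mathbf{z}^H\mathbf{M}_2\mathbf{z}=\mathbf{z}^H(\mathbf{M}_1+i\mathbf{M}_2)\mathbf{z}$, one sees $\mathcal{R}$ is the cone generated by the numerical range of the complex matrix $\mathbf{M}_1+i\mathbf{M}_2$, which is convex by the Toeplitz--Hausdorff theorem. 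Since scaling $\mathbf{z}$ by any nonzero scalar multiplies both coordinates by the same positive number, the hypothesis (stated for vectors with last entry equal to $1$) first propagates to all $\mathbf{z}$ with nonzero last entry, and then, by perturbing such a $\mathbf{z}$ by a small multiple of $\begin{bmatrix}\widehat{\mathbf{x}}\\ 1\end{bmatrix}$ and passing to the limit, to every $\mathbf{z}\in\mathbb{C}^{N+1}$; hence $\mathbf{z}^H\mathbf{M}_1\mathbf{z}<0$ implies $\mathbf{z}^H\mathbf{M}_2\mathbf{z}\le 0$, i.e.\ $\mathcal{R}$ does not meet the open convex set $\{(u,v):u<0,\,v>0\}$. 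Separating that set from the convex cone $\mathcal{R}$ by a hyperplane through the origin produces a nonzero $(\omega,-\mu)$ with $\omega u-\mu v\ge 0$ on all of $\mathcal{R}$; the orientation of the separated region forces $\omega\ge0$ and $\mu\ge0$, and $f_1(\widehat{\mathbf{x}})<0$ rules out $\mu=0$ (that would give $\mathbf{M}_1\succeq\mathbf{0}$, hence $f_1\ge0$ everywhere). Normalizing $\mu=1$ then yields $\omega\mathbf{z}^H\mathbf{M}_1\mathbf{z}-\mathbf{z}^H\mathbf{M}_2\mathbf{z}\ge0$ for all $\mathbf{z}$, i.e.\ $\omega\mathbf{M}_1-\mathbf{M}_2\succeq\mathbf{0}$.

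The only non-routine step, and thus the main obstacle, is this convexity-and-separation argument: establishing that $\mathcal{R}$ is convex (Toeplitz--Hausdorff in the complex Hermitian case) and carefully pushing the hypothesis from the affine slice ``last entry $=1$'' to the whole cone, which is precisely where the constraint qualification is consumed. The other direction and all the algebra around the homogenization are immediate. Given that Lemma~1 is a textbook fact, in the paper I would simply cite \cite{book:convex} rather than carry out this argument in detail.
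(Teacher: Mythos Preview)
Your reading is exactly right: the paper does not prove Lemma~1 at all but merely states it with the citation to \cite{book:convex}, precisely as you anticipated in your final sentence. The proof sketch you supply (homogenize, use Toeplitz--Hausdorff convexity of the joint numerical range, separate from the open quadrant, and consume the Slater point to rule out $\mu=0$) is the standard textbook argument and is correct, but it goes well beyond what the paper itself contains.
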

Now, we apply Lemma 1 to constraint $\overline{\mathrm{C6}}$. In particular, we define $\mathbf{\widehat g}_j=\vect(\mathbf{\widehat G}_j)$, $\Delta\mathbf{ g}_j=\vect(\Delta\mathbf{G}_j)$,  $\widetilde{{\mathbf{W}}}_{\mathrm{I}}=\mathbf{I}_{N_\mathrm{R}}\otimes\overline{\mathbf{W}}_{\mathrm{I}}$, and $\widetilde{{\mathbf{W}}}_{\mathrm{E}}=\mathbf{I}_{N_\mathrm{R}}\otimes\mathbf{W}_{\mathrm{E}}$.  By exploiting  the fact that $\norm{\Delta\mathbf{G}_j}_F^2 \le \varepsilon_j^2 \Leftrightarrow \Delta\mathbf{g}_j^H \Delta\mathbf{g}_j\le \varepsilon_j^2$, then we have
\begin{eqnarray}
\hspace*{-4.8mm}&&\hspace*{-0.8mm}\norm{\Delta\mathbf{G}_j}_F^2 \le \varepsilon_j^2\\
\Rightarrow\,\hspace*{-4.8mm}&&\hspace*{-0.8mm}\overline{\mathrm{C6}}:  0 \ge   \gamma_j+ \min_{\Delta\mathbf{g}_j\in {\bm\Psi}_j} -\Big\{\Delta\mathbf{g}_j^H\Big(\widetilde{{\mathbf{W}}}_{\mathrm{I}}+\widetilde{{\mathbf{W}}}_{\mathrm{E}}\Big)\Delta\mathbf{g}_j\notag\\
&&\hspace*{-5mm}+
2\mathrm{Re}\Big\{\mathbf{\widehat g}_j^H\Big(\widetilde{{\mathbf{W}}}_{\mathrm{I}}+\widetilde{{\mathbf{W}}}_{\mathrm{E}}\Big)\Delta\mathbf{g}_j\Big\}+\mathbf{\widehat g}_j^H\Big(\widetilde{{\mathbf{W}}}_{\mathrm{I}}+\widetilde{{\mathbf{W}}}_{\mathrm{E}}\Big)\mathbf{\widehat g}_j \Big\},\forall j,\notag
\end{eqnarray}
if and only if there exists a $\rho_j\ge 0$ such that the following
LMIs constraint holds:
\begin{eqnarray}
\label{eqn:LMI_6}
\mbox{\textoverline{C6}: }\mathbf{S}_{\mathrm{\overline{C6}}_j}
\hspace*{-2.5mm}&=&\hspace*{-2.5mm}
         \begin{bmatrix}
       \rho_j\mathbf{I}_{N_{\mathrm{T}}}\hspace*{-0.5mm}+\hspace*{-0.5mm}\widetilde{{\mathbf{W}}}_{\mathrm{E}}& \hspace*{-0.5mm}\widetilde{{\mathbf{W}}}_{\mathrm{E}}\mathbf{\hat g}_j          \\ \notag
       \mathbf{\hat g}_j^H \widetilde{{\mathbf{W}}}_{\mathrm{E}}
        & \hspace*{-0.5mm}-\hspace*{-0.5mm}\rho_j\varepsilon_j^2 \hspace*{-0.5mm}-\hspace*{-0.5mm}\frac{\gamma_j}{\eta_j}\hspace*{-0.5mm} +\hspace*{-0.5mm} \mathbf{\hat g}_j^H \widetilde{{\mathbf{W}}}_{\mathrm{E}} \mathbf{\hat g}_j        \\
           \end{bmatrix}\\
           \hspace*{-0.5mm}&+&\hspace*{-0.5mm} \mathbf{U}_{\mathbf{g}_j}^H\widetilde{{\mathbf{W}}}_{\mathrm{I}}\mathbf{U}_{\mathbf{g}_j}\hspace*{-0.5mm}\succeq \hspace*{-0.5mm}\mathbf{0}, \forall k,
          \end{eqnarray}
where $\mathbf{U}_{\mathbf{g}_j}=\big[\mathbf{I}_{N_{\mathrm{R}}N_{\mathrm{T}}},\,\, \mathbf{\widehat g}_j\big]$. The new  constraint $\mbox{\textoverline{C6}}$ is not only an affine function  with respect to the optimization variables, but also involves only a finite number of constraints. Then, we apply the SDP relaxation by removing constraint $\mbox{\textoverline{C3}}$ from Problem \ref{prob:multiobj_WIPTsepuser2}. As a result, the SDP relaxed problem is given by
\begin{framed}
\begin{Prob}SDP Relaxed Transformed MOOP:\label{prob:multiobj_WIPTsepuser_relaxed}
\begin{eqnarray}
\underset{\overline{\mathbf{W}}_\mathrm{I},\overline{\mathbf{W}}_\mathrm{E}\in\mathbb{H}^{N_\mathrm{T}},
\theta,\tau,\gamma_j,\rho_j}{\mino}&&\tau \\
\mathrm{subject\,\,to}\,\,\, &&\overline{\mathrm{C1}},\,\overline{\mathrm{C2}},\,\overline{\mathrm{C4}},\,\overline{\mathrm{C5}}, \overline{\mathrm{C7}},\,\notag\\
&&\overline{\mathrm{C6}}: \mathbf{S}_{\mathrm{\overline{C6}}_j}\succeq\zero, \overline{\mathrm{C8}}: \rho_j\geq 0,\forall j,\notag
\end{eqnarray}
\end{Prob}\vspace*{-0.5mm}\end{framed}
\noindent which is a convex SDP problem and can be solved by numerical convex program solvers such as CVX \cite{website:CVX}. In particular, if the obtained solution $\overline{\mathbf{W}}_\mathrm{I}^*$ of the SDP relaxed problem satisfies constraint $\overline{\mathrm{C3}}$, i.e., $\Rank(\overline{\mathbf{W}}_\mathrm{I}^*)\leq1$, then it is the optimal solution. Now, we study the tightness of the SDP relaxation by the following theorem.
\begin{Thm}\label{thm:rankone}
Assuming that the channels, i.e., $\mathbf{h}$ and $\mathbf{G}_j$,  are statistically independent and Problem \ref{prob:multiobj_WIPTsepuser_relaxed} is feasible,
the optimal beamforming matrix of Problem \ref{prob:multiobj_WIPTsepuser_relaxed} is a rank-one matrix with probability one, i.e.,  $\Rank(\overline{\mathbf{W}}_\mathrm{I}^*)\le1$. Besides, for $\omega_1>0$, the optimal energy signal is $\overline{\mathbf{w}}_\mathrm{E}^*=\zero$.
\end{Thm}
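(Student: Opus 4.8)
\emph{Proof sketch.} The plan is to work with the Karush--Kuhn--Tucker (KKT) conditions of Problem~\ref{prob:multiobj_WIPTsepuser_relaxed}. Since that problem is a convex SDP, is assumed feasible, and is strictly feasible for any nontrivial power budget (so Slater's condition holds), strong duality applies and the KKT system characterizes the optimal solution. I would introduce multipliers $\mu_1,\mu_4\ge 0$ for $\overline{\mathrm{C1}},\overline{\mathrm{C4}}$, positive semidefinite matrices $\mathbf{Y}_\mathrm{I},\mathbf{Y}_\mathrm{E}\succeq\zero$ for the two parts of $\overline{\mathrm{C2}}$, positive semidefinite matrices $\mathbf{D}_j\succeq\zero$ for the LMIs $\overline{\mathrm{C6}}$, and scalars $\beta,\beta'\ge 0$ for the two inequalities in $\overline{\mathrm{C7}}$ (those associated with the AR-EE objective $F_4$ and the ET-EE objective $F_5$, respectively), and write out the Lagrangian.

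The key structural observation is that $\overline{\mathbf{W}}_\mathrm{I}$ and $\overline{\mathbf{W}}_\mathrm{E}$ enter $\overline{\mathrm{C1}},\overline{\mathrm{C4}}$ only through $\Tr(\overline{\mathbf{W}}_\mathrm{I}+\overline{\mathbf{W}}_\mathrm{E})$ and enter $\overline{\mathrm{C6}}$ only through $\widetilde{\mathbf{W}}_\mathrm{I}+\widetilde{\mathbf{W}}_\mathrm{E}=\mathbf{I}_{N_\mathrm{R}}\otimes(\overline{\mathbf{W}}_\mathrm{I}+\overline{\mathbf{W}}_\mathrm{E})$; the only term that distinguishes them is $\beta\omega_1 F_4$, with $\partial F_4/\partial\overline{\mathbf{W}}_\mathrm{I}=\lambda\,\mathbf{h}\mathbf{h}^H$ for some $\lambda>0$, while $F_4$ is independent of $\overline{\mathbf{W}}_\mathrm{E}$. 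Hence the stationarity conditions $\partial L/\partial\overline{\mathbf{W}}_\mathrm{I}=\zero$ and $\partial L/\partial\overline{\mathbf{W}}_\mathrm{E}=\zero$ give $\mathbf{Y}_\mathrm{E}=\mathbf{A}$ and $\mathbf{Y}_\mathrm{I}=\mathbf{A}-\beta\omega_1\lambda\,\mathbf{h}\mathbf{h}^H$, where $\mathbf{A}=(\mu_1+\mu_4/\xi)\mathbf{I}_{N_\mathrm{T}}-\sum_j\mathbf{M}_j$, and $\mathbf{M}_j\succeq\zero$ is the $N_\mathrm{T}\times N_\mathrm{T}$ matrix obtained by differentiating $\Tr(\mathbf{D}_j\mathbf{S}_{\overline{\mathrm{C6}}_j})$ with respect to $\overline{\mathbf{W}}_\mathrm{I}$ (a block trace of the PSD matrix $\mathbf{U}_{\mathbf{g}_j}\mathbf{D}_j\mathbf{U}_{\mathbf{g}_j}^H$); note $\mathbf{A}=\mathbf{Y}_\mathrm{E}\succeq\zero$. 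Differentiating $L$ with respect to $\tau$ yields $\beta+\beta'=1$, and $\beta>0$ whenever $\omega_1>0$ (otherwise $\mathbf{Y}_\mathrm{I}=\mathbf{Y}_\mathrm{E}$, which with $\mathbf{A}\succ\zero$ below and complementary slackness would force $\overline{\mathbf{W}}_\mathrm{I}^*=\zero$, i.e.\ vanishing rate, contradicting optimality). Thus $c:=\beta\omega_1\lambda>0$.

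Granting $\mathbf{A}\succ\zero$, both claims follow at once. For the energy signal, $\mathbf{Y}_\mathrm{E}=\mathbf{A}\succ\zero$ together with $\mathbf{Y}_\mathrm{E}\overline{\mathbf{W}}_\mathrm{E}^*=\zero$ gives $\overline{\mathbf{W}}_\mathrm{E}^*=\zero$, i.e.\ $\overline{\mathbf{w}}_\mathrm{E}^*=\zero$. For the beamformer, right-multiplying $\mathbf{Y}_\mathrm{I}=\mathbf{A}-c\,\mathbf{h}\mathbf{h}^H$ by $\overline{\mathbf{W}}_\mathrm{I}^*$ and using $\mathbf{Y}_\mathrm{I}\overline{\mathbf{W}}_\mathrm{I}^*=\zero$ yields $\overline{\mathbf{W}}_\mathrm{I}^*=c\,\mathbf{A}^{-1}\mathbf{h}\,(\mathbf{h}^H\overline{\mathbf{W}}_\mathrm{I}^*)$, whose right-hand side has rank at most one; hence $\Rank(\overline{\mathbf{W}}_\mathrm{I}^*)\le 1$, and $\mathbf{w}_\mathrm{I}$ is recovered by an eigendecomposition and the Charnes--Cooper inversion.

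The crux, and the step I expect to be hardest, is establishing $\mathbf{A}=\mathbf{Y}_\mathrm{E}\succ\zero$ with probability one, which is exactly where the statistical independence of $\mathbf{h}$ and $\{\mathbf{G}_j\}$ is used. If $\mathbf{A}$ were singular with null vector $\mathbf{v}\ne\zero$, then $0\le\mathbf{v}^H\mathbf{Y}_\mathrm{I}\mathbf{v}=-c\,\lvert\mathbf{h}^H\mathbf{v}\rvert^2$, so (as $c>0$) $\mathbf{h}^H\mathbf{v}=0$; that is, $\mathbf{h}$ would have to lie in the range of $\mathbf{A}$, a proper subspace built from $\mathbf{I}_{N_\mathrm{T}}$ and the matrices $\mathbf{M}_j$, which depend only on the ER channel estimates $\widehat{\mathbf{G}}_j$ (equivalently on $\mathbf{G}_j$) and on the dual variables. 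Since $\mathbf{h}$ has a continuous distribution and is independent of $\{\mathbf{G}_j\}$, such an event has probability zero. The delicate point in making this rigorous is that the optimal dual variables themselves depend on the realization of $\mathbf{h}$, so the argument must be carried out on the (measure-zero) set of channel realizations for which a primal optimizer can be perturbed within $\nullspace(\mathbf{A})$ without leaving the optimal face, in parallel with the corresponding step in \cite{JR:Kwan_secure_imperfect}.
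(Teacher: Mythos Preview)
Your proposal is correct and follows the same KKT-based route as the paper: both derive $\mathbf{Y}_\mathrm{E}=\mathbf{A}$ (the paper's $\mathbf{Y}$) and $\mathbf{Y}_\mathrm{I}=\mathbf{A}-c\,\mathbf{h}\mathbf{h}^H$ (the paper's $\mathbf{X}$), establish $\mathbf{A}\succ\zero$, and read off the two conclusions from complementary slackness and the rank inequality. Two minor differences are worth noting. First, for the crux step $\mathbf{A}\succ\zero$, the paper argues by a dual-unboundedness contradiction: if $\mathbf{A}$ had a nonpositive eigenvector $\mathbf{v}$, substituting $\overline{\mathbf{W}}_\mathrm{I}=r\mathbf{v}\mathbf{v}^H$ into the inner minimization and letting $r\to\infty$ drives the Lagrangian to $-\infty$, contradicting strong duality; you instead argue directly that a null vector $\mathbf{v}$ of $\mathbf{A}$ would force $\mathbf{h}^H\mathbf{v}=0$ via $\mathbf{Y}_\mathrm{I}\succeq\zero$, a measure-zero event. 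Both arguments ultimately hinge on the same independence of $\mathbf{h}$ and $\{\mathbf{G}_j\}$, and you correctly flag the subtlety that the dual variables depend on $\mathbf{h}$; the paper's unboundedness framing sidesteps that circularity somewhat more cleanly. Second, the paper treats $\omega_1=0$ as a separate case (there $\mathbf{Y}_\mathrm{I}=\mathbf{Y}_\mathrm{E}$, so one may take $\overline{\mathbf{W}}_\mathrm{I}=\zero$ without loss of optimality), whereas your sketch focuses on $\omega_1>0$; to match the full theorem statement you should add that one-line case. Finally, your justification of $\beta>0$ appeals to ``$\mathbf{A}\succ\zero$ below'', which is slightly circular; the paper simply asserts the analogous $\upsilon_1>0$, and the cleanest fix is to note that $\beta+\beta'=1$ and that $\beta=0$ collapses the stationarity conditions to the pure ET-EE case already handled.
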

\begin{proof}
Please refer to the Appendix.
\end{proof}
Therefore, the adopted SDP relaxation is tight. Besides, whenever AR-EE is considered, i.e., $\omega_1>0$, no dedicated energy beam is needed. In fact, the optimal information beam, $\overline{\mathbf{w}}_\mathrm{I}$, serves as a dual purpose carrier for maximization of the energy efficiency of information transmission and  WPT simultaneously. Furthermore, Problems 1-2 can be solved by SDP relaxation as solving Problem \ref{prob:multiobj_WIPTsepuser_relaxed}.

%==============================================================================================================================================
\section{Results}
In this section, we present simulation results to demonstrate the system performance of multi-objective SWIPT system design. The simulation parameters are summarized in Table \ref{table:parameters}.  The IR and $J$ ERs are located $100$ meters and $10$ meters from the transmitter.  In particular, the ERs are near the transmitter with line-of-sight communication channels to facilitate energy harvesting. Each ER is equipped with $N_{\mathrm{R}}=2$ antennas for facilitating EH. We assume that the noise powers at each antenna of the IR and the ERs are identical, i.e., $\sigma_\mathrm{I}^2=\sigma_\mathrm{E}^2=\sigma^2$.  In the sequel, we define the normalized maximum  channel estimation error of ER $j$  as  $\delta_{j}^2=\frac{\varepsilon^2_j}{\norm{\mathbf{G}_j}^2_F}$  with $\delta_{a}^2=\delta_{b}^2=0.05,\forall a, b\in\{1,\ldots,J\}$.    All simulation results are obtained by averaging the system performance over different multipath channel realizations.
\begin{table}[t]
\caption{Simulation Parameters} \label{table:parameters}
\centering
\begin{tabular}{ | l | l | } \hline
      Carrier center frequency                           & 915 MHz\\ \hline
      Bandwidth                                          & $200$ kHz \\ \hline %narrow band channel
      Single antenna power consumption                   & $P_\mathrm{ant}=1$ W \\ \hline
      Static circuit power consumption                           & $P_\mathrm{c}=150$ W \cite{Jr:power_consumption} \\ \hline
      Power amplifier efficiency                         & $\xi=0.2$ \\ \hline
      Transmit  antenna gain                                     & 18 dBi \\ \hline
      Noise power                                        & $\sigma^2= -95$ dBm \\ \hline
      Transmitter-to-ERs fading distribution                                      & Rician with Rician factor $6$ dB \\
        \hline
       Transmitter-to-IR fading distribution                                      & Rayleigh \\
      \hline
%      Reference distance                                 & 1 meters \\ \hline
%      Maximum service distance                           & 10 meters \\ \hline
      Energy conversion efficiency                       & $\eta_j=0.5$ \\ \hline
\end{tabular}
\end{table}

\begin{figure}[t]
        \centering
       \includegraphics[width=3.5 in]{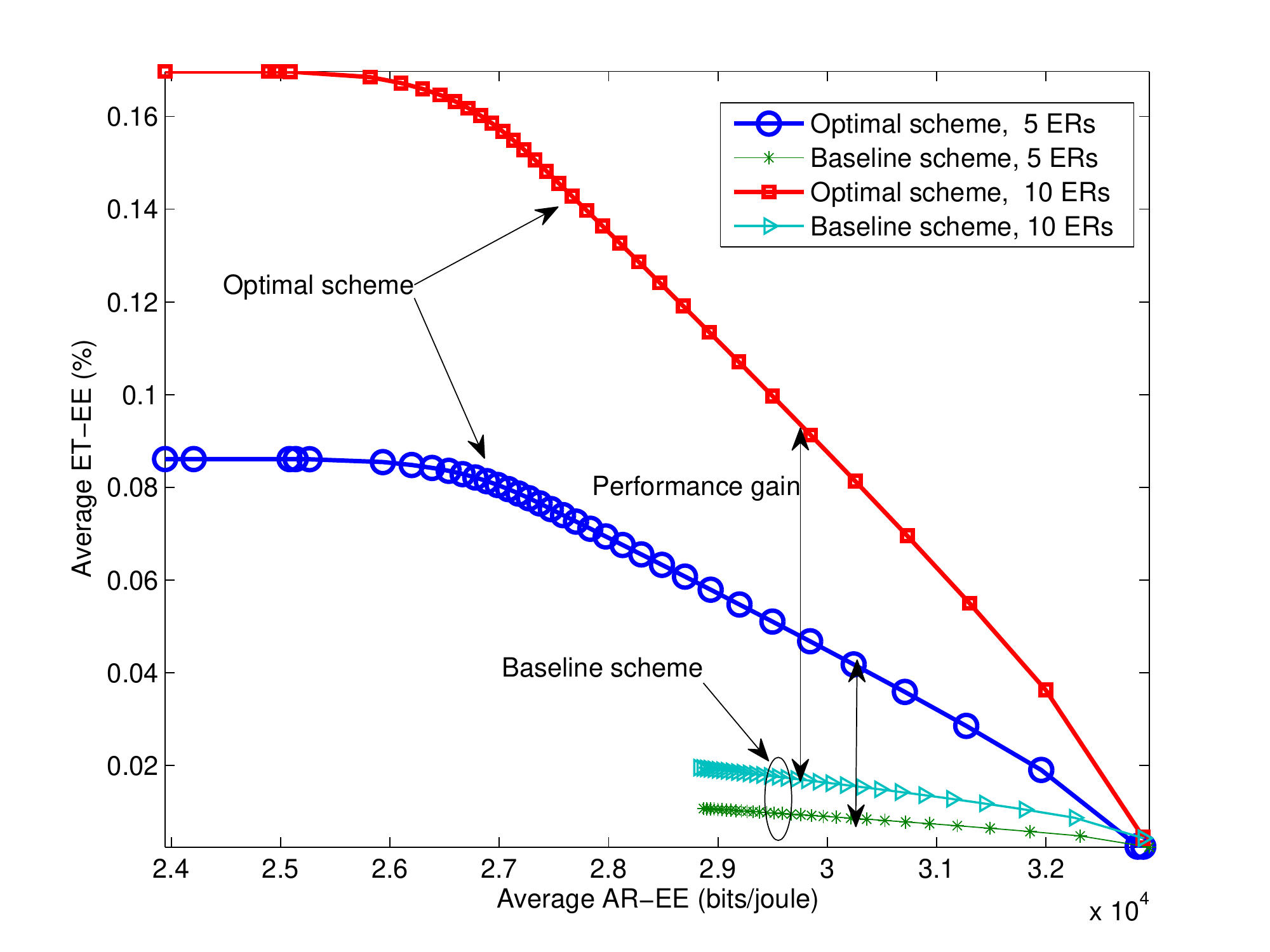}
        \caption{System performance trade-off region between the average  ET-EE  and the average AR-EE for $P_{\max}=40$ dBm. The double-sided arrows indicate the performance gain achieved by the proposed optimal scheme over the baseline scheme.}
        \label{fig:trade-off}
\end{figure}

Figure \ref{fig:trade-off} depicts the  trade-off region  for
 the average ET-EE and the average AR-EE  achieved by the proposed optimal scheme for different numbers of ERs. The maximum transmit power is set to $P_{\max}=40$ dBm.
The trade-off region in  Figure \ref{fig:trade-off} is obtained by solving Problem $7$
 via varying the values of $0\le \lambda_p\le 1,\forall p\in\{1,2\}$,   uniformly  for a step size of $0.01$ such that $\sum_p \lambda_p=1$.
It can be observed that the ET-EE is a monotonically decreasing function
 with respect to the  AR-EE. The result indicates that AR-EE maximization and ET-EE maximization are conflicting system design objectives in general.  In other words, a resource allocation algorithm maximizing the AR-EE cannot maximize the ET-EE simultaneously in the considered system.  Besides, the trade-off region is enlarged for an increasing number of ERs.  This is due to the fact that
a larger portion of the radiated power can be harvested when there are
more ERs in the system since more receivers  participate
in the energy harvesting process.

For comparison, we also plot the trade-off region of a baseline
power allocation scheme  in Figure
\ref{fig:trade-off}. For the baseline scheme, the covariance matrix of the energy signal  $\overline{\mathbf{W}}_{\mathrm{E}}$ is set to zero. Then,  maximum ratio transmission (MRT) with respect to the IR is adopted for the information beamforming matrix $\overline{\mathbf{W}}_{\mathrm{I}}$. In other words, the beamforming direction of matrix $\overline{\mathbf{W}}_\mathrm{I}$ is fixed. Then, we optimize  the power of  $\overline{\mathbf{W}}_\mathrm{I}$ subject to the constraints in Problem $7$.  It can be observed
that the  baseline scheme achieves a significantly smaller trade-off region compared to the proposed optimal scheme. As a matter of fact, the degrees of freedom of the beamforming matrix $\overline{\mathbf{W}}_\mathrm{I}$ are jointly optimized in our proposed optimal scheme via utilizing the CSI of all receivers.  On the contrary,  the information beamformer in the baseline scheme  is restricted to the range space of the IR. Although the baseline scheme is optimal when AR-EE is the only system design objective, the information beamformer cannot be steered towards the direction of the ERs. Thus, compared to the proposed optimal scheme, the baseline scheme is less efficient when ET-EE is considered.

\begin{figure}[t]\vspace*{-3mm}
        \includegraphics[width=3.5 in]{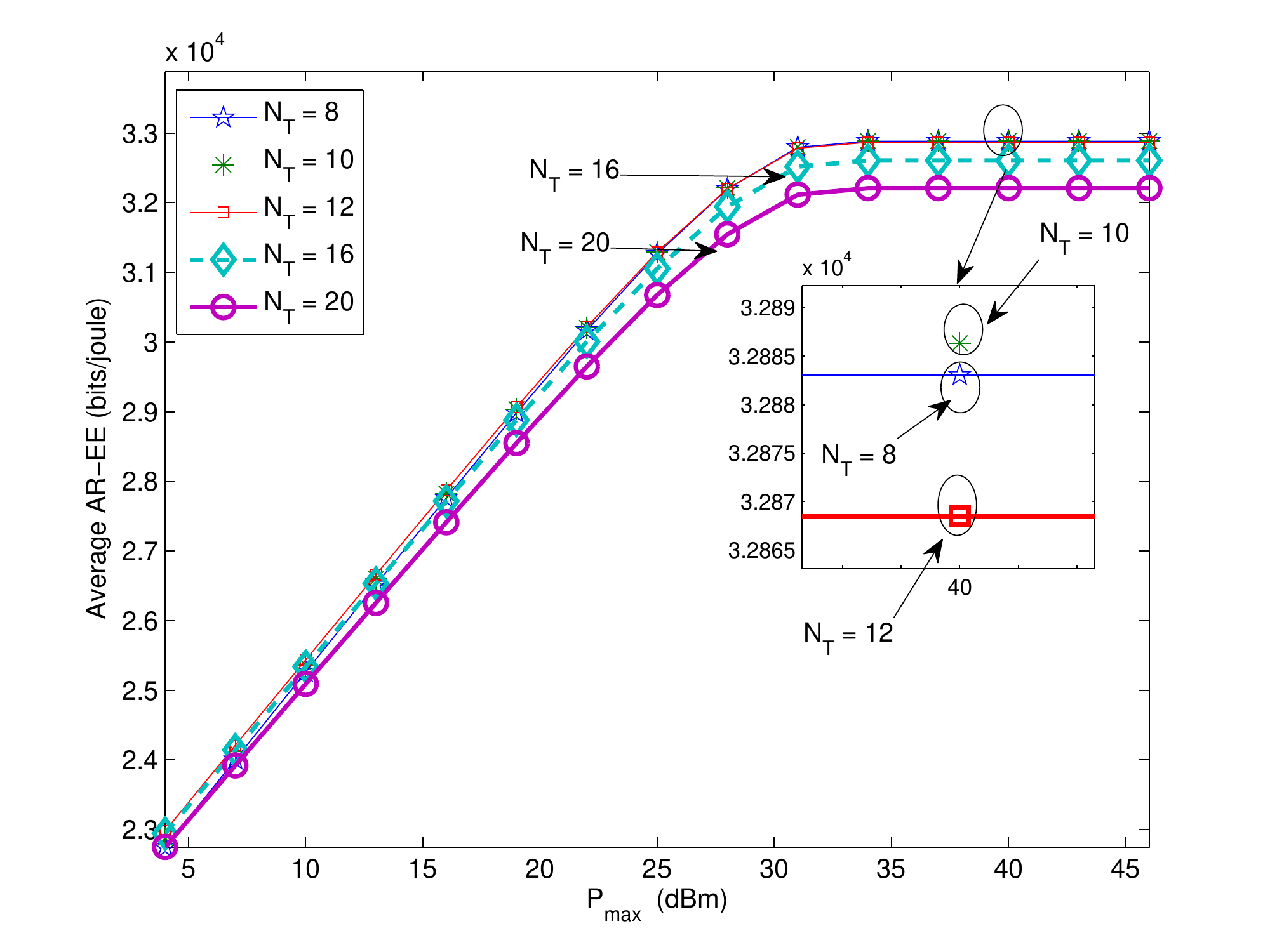}
        \caption{Average AR-EE (bits/joule) versus maximum transmit power budget $P_{\max}$ (dBm).}
        \label{fig:trade-off1}\vspace*{-3mm}
\end{figure}

Figures \ref{fig:trade-off1} and \ref{fig:trade-off2} illustrate  the average AR-EE and the average ET-EE versus the total transmit power budget $P_{\max}$, respectively, for $J=10$ ERs. The results in Figures \ref{fig:trade-off1} and \ref{fig:trade-off2} are obtained by solving Problem $7$ with $\{\omega_1=1,\omega_2=0\}$ and  $\{\omega_1=0,\omega_2=1\}$, respectively. It can be observed from Figure \ref{fig:trade-off1} that the AR-EE of the proposed optimal scheme increases with respect to $P_{\max}$ monotonically  and  reaches an upper limit where the EE gain due to a higher value of $P_{\max}$ vanishes. This result indicates that once the maximum AR-EE is achieved by transmitting a sufficiently large power, any additional increase in the transmitted power will incur a loss in EE which is prevented by the optimal algorithm. On the other hand, it can be seen from Figure \ref{fig:trade-off2} that the average ET-EE increases slowly in the low transmit power regime but increases rapidly in the high transmit power regime. This is because for a small transmit power, the ET-EE is dominated by the fixed circuit power consumption, $P_{\mathrm{B}}$, leading to a slow increasing rate of ET-EE with respect to the transmit power. As the transmit power budget increases, the transmit power consumption in the RF becomes significant and the ET-EE becomes more sensitive to increases in transmit power budget, cf. \eqref{eq:ET-EE}. On the other hand, the number of transmit antennas $N_{\mathrm{T}}$ affects the AR-EE and the ET-EE differently. In fact, the maximum AR-EE does not necessarily increase with the number of transmit antennas when the per-antenna power consumption in considered, cf. Figure \ref{fig:trade-off1}.  This is because the AR scales logarithmically with respect to the number of transmit antennas. However, the AR gain  due to extra transmit antennas is not sufficient to compensate the total increased energy cost since the circuit power consumption increases linearly with respect to  $N_{\mathrm{T}}$. Thus,  adopting exceedingly large numbers of transmit antennas may not be a viable solution for information transmission. In contrast, the maximum  ET-EE increases with $N_{\mathrm{T}}$ as shown in Figure \ref{fig:trade-off2}. This is due to the fact that the ET-EE function in \eqref{eq:ET-EE} is a quasi-linear function with respect to both $\overline{\mathbf{W}}_\mathrm{I}$ and $\overline{\mathbf{W}}_\mathrm{E}$. Thus, a large number of transmit antennas is beneficial if ET-EE is the only system design objective.

\begin{figure}[t]\vspace*{-3mm}
        \includegraphics[width=3.5 in]{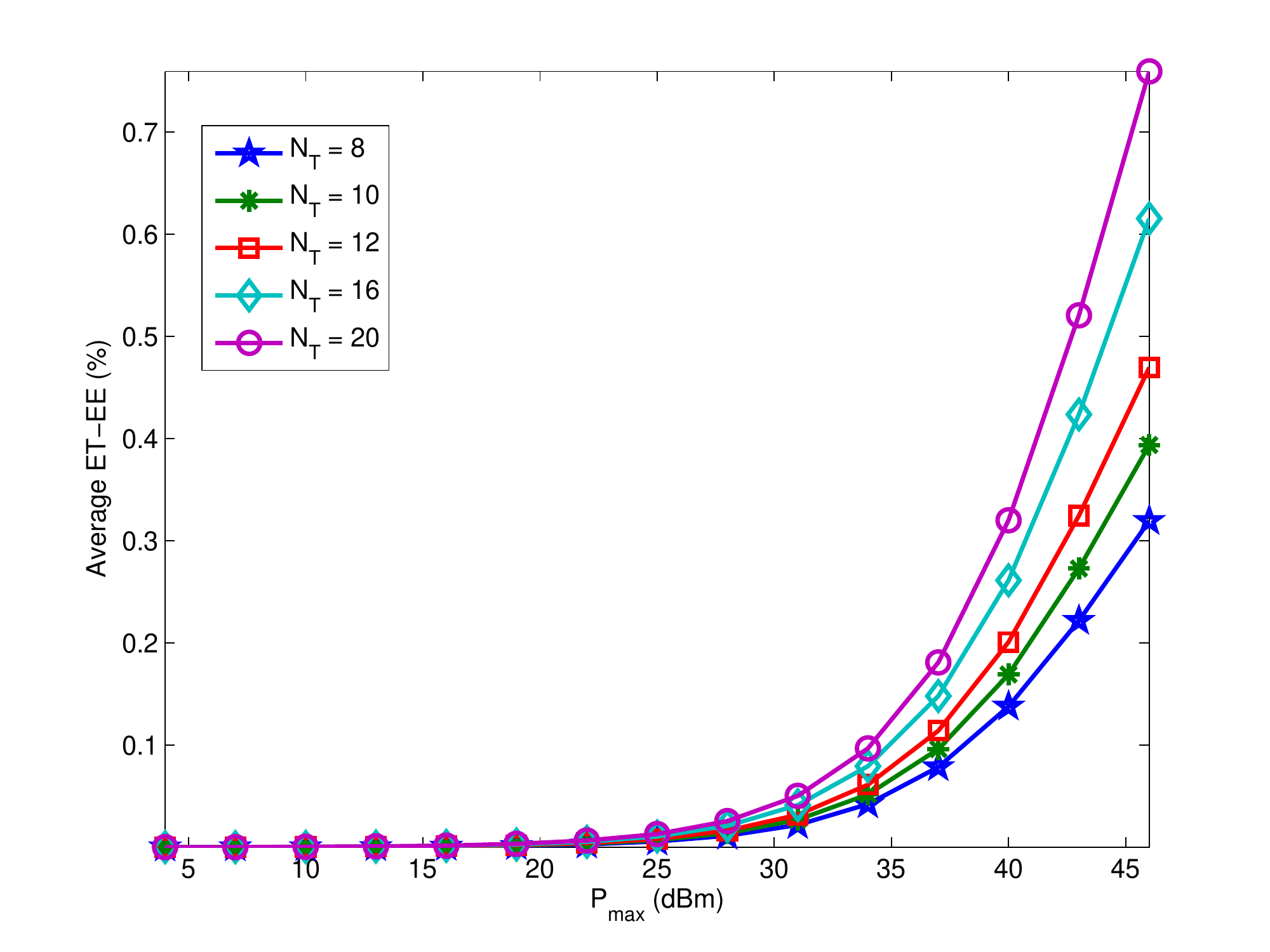}
        \caption{Average ET-EE  versus maximum transmit power budget $P_{\max}$ (dBm).}
        \label{fig:trade-off2}\vspace*{-3mm}
\end{figure}
\section{Conclusions}\label{sect:conclusion}
In this paper, we studied the resource allocation algorithm design for energy-efficient SWIPT networks.
 The algorithm design was formulated as a non-convex  MOOP employing the weighted Tchebycheff method.  The proposed problem aimed at the joint maximization of the energy efficiency of information transmission and  WPT simultaneously. Besides, the imperfectness of the CSI of the ERs was also taken into account for designing a robust resource allocation algorithm. The proposed MOOP was solved optimally by SDP relaxation. Simulation results not only reveal the trade-off between the studied conflicting systems design objectives, but also shed some light on the use of multiple transmit antennas for improving system energy efficiency.

\section*{Appendix-Proof of Theorem \ref{thm:rankone}}\label{app:rankone}
 Since Problem \ref{prob:multiobj_WIPTsepuser_relaxed} satisfies Slater's constraint qualification and is jointly convex with respect to the optimization variables, strong duality holds. Thus, solving the dual problem is equivalent to solving the primal problem. Therefore,  Theorem \ref{thm:rankone} can be proved by analyzing the dual problem of Problem \ref{prob:multiobj_WIPTsepuser_relaxed}. To this end, we define the Lagrangian function ${\cal L}$
\begin{eqnarray} \label{eqn:appB7}
&=&\tau+\alpha\big(\Tr(\overline{\mathbf{W}}_\mathrm{I}+\overline{\mathbf{W}}_\mathrm{E})-\theta P_{\mathrm{max}}\big)-\delta\theta\notag\\
&+&\beta\big(\frac{\Tr(\overline{\mathbf{W}}_\mathrm{I}+\overline{\mathbf{W}}_\mathrm{E})}{\xi}+
\theta P_{\mathrm{B}}-1\big)\notag\\
&+&\upsilon_1\Big[\omega_1\big(F_1^*-\theta\log_2(1+\frac{\Tr(\mathbf{H}\overline{\mathbf{W}}_\mathrm{I})}{\theta\sigma_\mathrm{I}^2})\big)-\tau\Big]\notag\\
&+&\upsilon_2\Big[\omega_2\big(F_2^*-\sum_{j=1}^J\gamma_j\big)-\tau\Big]-\Tr(\mathbf{X}\overline{\mathbf{W}}_\mathrm{I})-\Tr(\mathbf{Y}\overline{\mathbf{W}}_\mathrm{E})\notag\\
&-&\sum_{j=1}^J\Tr(\mathbf{D}_{\mathrm{\overline{C6}}_j}\mathbf{S}_{\mathrm{\overline{C6}}_j})-\sum_{j=1}^J\mu_j\rho_j,
\end{eqnarray}
where $\alpha,\beta,\delta,\mu_j\ge 0$ are dual variables associated with  constraints $\overline{\mathrm{C1}}$, $\overline{\mathrm{C4}}$, $\overline{\mathrm{C5}}$, and $\overline{\mathrm{C8}}$, respectively. Dual variable matrices $\mathbf{X},\mathbf{Y}$, and $\mathbf{D}_{\mathrm{\overline{C6}}_j}$ are connected to the LMI constraints in $\overline{\mathrm{C2}}$ and $\overline{\mathrm{C6}}$, respectively. $\upsilon_1,\upsilon_2$ are the dual variables for constraint $\overline{\mathrm{C7}}$.

Then, the dual problem of  Problem \ref{prob:multiobj_WIPTsepuser_relaxed} is given by
\begin{eqnarray}\label{eqn:dual}
\hspace*{-1cm}\underset{ {\alpha,\beta,\delta,\mu_j\ge 0,\mathbf{Y},\mathbf{X},\mathbf{D}_{\mathrm{\overline{C6}}_j}\succeq \mathbf{0}}}{\maxo} \quad\underset{\underset{\theta,\tau,\gamma_j,\rho_j}{\overline{\mathbf{W}}_\mathrm{I},\overline{\mathbf{W}}_\mathrm{E}
\in\mathbb{H}^{N_{\mathrm{T}}},}}{\mino}\,\,\,{\cal
L}.\label{eqn:master_problem}
\end{eqnarray}

  Now, we focus on those Karush-Kuhn-Tucker (KKT) conditions which are useful for the proof:
\begin{eqnarray}
\hspace*{-3mm}\mathbf{Y},\mathbf{X},\mathbf{D}_{\mathrm{\overline{C6}}_j}\hspace*{-3mm}&\succeq&\hspace*{-3mm}\mathbf{0},\,\,\alpha,\beta,\delta,\mu_j\ge 0,\,\forall j, \label{eqn:dual_variables}\\
\mathbf{X}\hspace*{-3mm}&=&\hspace*{-3mm}\mathbf{Y}-\frac{\upsilon_1\omega_1\theta}{\theta\sigma_\mathrm{I}^2+\Tr(\mathbf{H}\overline{\mathbf{W}}_\mathrm{I})}\mathbf{H},\label{eqn:appB1}\\
\mathbf{Y}\hspace*{-3mm}&=&\hspace*{-3mm}(\hspace*{-0.5mm}\alpha\hspace*{-0.5mm}+\hspace*{-0.5mm}\frac{\beta}{\xi})\mathbf{I}_{N_{\mathrm{T}}}
\hspace*{-1.5mm}-\hspace*{-1.5mm}\sum_{j=1}^{J}\hspace*{-0.5mm}\sum_{l=1}^{N_{\mathrm{R}}}\hspace*{-0.5mm}\Big[\hspace*{-0.5mm}
\mathbf{U}_{\mathbf{g}_j}\mathbf{D}_{\mathrm{\overline{C6}}_j}\hspace*{-0.5mm}\mathbf{U}_{\mathbf{g}_j}^H \hspace*{-0.5mm}\Big]_{a:b,c:d},\label{eqn:appB2}\\
\mathbf{X}\overline{\mathbf{W}}_\mathrm{I}\hspace*{-3mm}&=&\hspace*{-3mm}\mathbf{0},\label{eqn:appB3}\\
\mathbf{Y}\overline{\mathbf{W}}_\mathrm{E}\hspace*{-3mm}&=&\hspace*{-3mm}\mathbf{0}.\label{eqn:appB4}
\end{eqnarray}
where $a=(l-1)N_{\mathrm{T}}+1,b=l N_{\mathrm{T}},c=(l-1)N_{\mathrm{T}}+1,$ and $d=l N_{\mathrm{T}}$. Next, we investigate the structure of  $\overline{\mathbf{W}}_\mathrm{I}$  in the following two cases.

\textbf{Case 1}: For $\omega_1=0$ and $\omega_2=1$,  AR-EE maximization is not considered and ET-EE maximization is the only system design objective. Besides, since $\omega_1=0$, we have $\mathbf{X}=\mathbf{Y}$. In other words, $\overline{\mathbf{W}}_\mathrm{I}$ has the same functionality as $\overline{\mathbf{W}}_\mathrm{E}$ for improving the ET-EE. Thus, without loss of generality and optimality, we can  set $\overline{\mathbf{W}}_\mathrm{I}=\zero$ and $\Rank(\overline{\mathbf{W}}_\mathrm{I})\leq 1$ holds for the optimal solution.

\textbf{Case 2}: For $\omega_1>0$, AR-EE maximization is considered in the resource allocation algorithm design.
Thus, constraint $\overline{\mathrm{C6}}$ for $j=1$ is active, i.e., $\upsilon_1>0$. Besides, from the complementary slackness condition in \eqref{eqn:appB3}, the columns of $\overline{\mathbf{W}}_\mathrm{I}$ for the optimal solution lie in the null space of $\mathbf{X}$ for $\overline{\mathbf{W}}_\mathrm{I}\ne\mathbf{0}$. Therefore, if  $\Rank(\mathbf{X})=N_{\mathrm{T}}-1$, then the optimal beamforming matrix must be a rank-one matrix. To reveal the structure of $\mathbf{X}$, we show by contradiction that $\mathbf{Y}$ is a positive definite matrix with probability one. For a given set of optimal dual variables, $\{\mathbf{Y},\mathbf{X},\mathbf{D}_{\mathrm{\overline{C6}}_j},\alpha,\beta,\delta,\mu_j \}$, and optimal primal variables $\{\tau,\gamma_j,\theta,\overline{\mathbf{W}}_\mathrm{E},\rho_j\}$,    (\ref{eqn:dual}) can be written as
\begin{eqnarray}\hspace*{-2mm}\label{eqn:dual2}
\,\,\underset{\overline{\mathbf{W}}_\mathrm{I}\in\mathbb{H}^{N_{\mathrm{T}}}}{\mino} \,\, {\cal L}.
\end{eqnarray}
Suppose $\mathbf{Y}$ is not positive definite, then we can construct $\overline{\mathbf{W}}_\mathrm{I}=r\mathbf{v}\mathbf{v}^H$ as one of the optimal solutions of (\ref{eqn:dual2}), where $r>0$ is a scaling parameter and $\mathbf{v}$ is the eigenvector corresponding to one of the non-positive eigenvalues of $\mathbf{Y}$.  We substitute $\overline{\mathbf{W}}_\mathrm{I}=r\mathbf{v}\mathbf{v}^H$ into (\ref{eqn:dual2}) which leads to
${\cal L}=\Tr(r\mathbf{Y}\mathbf{v}\mathbf{v}^H)-r\Tr\Big(\mathbf{v}\mathbf{v}^H
\big(\mathbf{X}+\frac{\upsilon_1\omega_1\theta\mathbf{H}}{\theta\sigma_\mathrm{I}^2+\Tr(\mathbf{H}\overline{\mathbf{W}}_\mathrm{I})}\big)\Big)+\Omega$ where $\Omega$ is the collection of variables that are independent of $\overline{\mathbf{W}}_\mathrm{I}$. Since the channels of $\mathbf{G}_j$ and $\mathbf{h}$ are assumed to be statistically independent, it follows that by setting $r\rightarrow \infty$, the dual optimal value  becomes unbounded from below. Besides, the optimal value of the primal problem is non-negative. Thus,  this leads to a contradiction since strong duality does not hold. Therefore, $\mathbf{Y}$ is a positive definite matrix with probability one, i.e., $\Rank(\mathbf{Y})=N_{\mathrm{T}}$.

By exploiting (\ref{eqn:appB1}) and a basic inequality for the rank of matrices, we have
\begin{eqnarray}
\hspace*{-3mm}&&\hspace*{-2mm}\Rank(\mathbf{X})+\Rank\Big(\frac{\upsilon_1\omega_1\theta\mathbf{H}}{\theta\sigma_\mathrm{I}^2+\Tr(\mathbf{H}\overline{\mathbf{W}}_\mathrm{I})}\Big)\\
&\ge&\hspace*{-2mm} \Rank\big(\mathbf{Y}\big)=N_\mathrm{T}\\
\hspace*{-3mm}&\Rightarrow &\hspace*{-2mm}
 \Rank(\mathbf{X})\ge N_{\mathrm{T}}-\Rank\Big(\frac{\upsilon_1\omega_1\theta\mathbf{H}}{\theta\sigma_\mathrm{I}^2+
 \Tr(\mathbf{H}\overline{\mathbf{W}}_\mathrm{I})}\Big).
\end{eqnarray}
Furthermore, $\overline{\mathbf{W}}_\mathrm{I}\ne\mathbf{0}$ is required to maximize the energy efficiency of data communication. Hence, $\Rank(\mathbf{X})=N_{\mathrm{T}}-1$ and $\Rank(\overline{\mathbf{W}}_\mathrm{I})=1$. Besides, in this case, since $\mathbf{Y}$ is full rank, $\overline{\mathbf{W}}_\mathrm{E}=\zero$ according to \eqref{eqn:appB4}. In other words, utilizing only information beam $\overline{\mathbf{w}}_\mathrm{I}$ is optimal when $\omega_1>0$.

\bibliographystyle{IEEEtran}
\bibliography{vicky_leng}

\end{document}